\definecolor{ForestGreen}{rgb}{0.15,0.416,0.18}
\definecolor{EgyptBlue}{rgb}{0.063,0.2,0.65}
\newtheorem{theorem}{Theorem}[section]
\theoremstyle{definition}
\theoremstyle{definition}
\newtheorem{remark}[theorem]{Remark}
\theoremstyle{definition}
\numberwithin{equation}{section}
\numberwithin{table}{section}
\numberwithin{figure}{section}
\title{Solvable difference equations similar to the Newton--Raphson iteration for algebraic equations}
\author{\textbf{Kazuki Maeda}\footnote{Email: kmaeda@kmaeda.net}}
\affil{Faculty of Informatics, University of Fukuchiyama,
  3370 Hori, Fukuchiyama,
  Kyoto 620-0886,
  Japan}
\newcommand\shorttitle{Solvable difference equations similar to the Newton--Raphson iteration for algebraic equations}
\newcommand\authorsshort{K. Maeda}
\renewcommand{\maketitle}{\bgroup\setlength{\parindent}{0pt}


\vspace{1truecm}
\begin{center}{\vbox{\titlefont\@title}}\end{center}
\vspace{0.5truecm}
\begin{center}{\@author} \end{center}

\egroup
}
\renewcommand{\@fnsymbol}[1]{%
    \ifcase#1 \or {\,\Letter\!} \or\textasteriskcentered\or \textasteriskcentered\textasteriskcentered
    \else\@ctrerr\fi}
\newcommand{\hbibitem}[4]{\bibitem{#1}{#2}
\def\@tempa{#3}%
\def\@tempb{#4}%
\ifx\@tempa\@empty\ifx\@tempb\@empty{}{}\else{}{\href{http://dx.doi.org/#4}{url}}\fi\else
{\href{http://www.ams.org/mathscinet-getitem?mr=#3}{MR#3}}\ifx\@tempb\@empty{}\else{; \href{http://dx.doi.org/#4}{url}}\fi\fi}
\newcommand*{\titlefont}{\fontsize{18}{21.6}\selectfont\textbf}
\renewcommand\@author{\ifx\AB@affillist\AB@empty\AB@author\else
      \ifnum\value{affil}>\value{Maxaffil}\def\rlap##1{##1}%
    \AB@authlist\\[\affilsep]\vbox{\AB@affillist}
    \else  \AB@authors\fi\fi}
\def\ps@plain{
\def\@oddhead{\ifnum\thepage=1
\else
\hss\textit{\shorttitle}\hss\hbox to 0pt{\hss\thepage}\fi}\def\@oddfoot{}
\def\@evenhead{\hbox to 0pt{\thepage\hss} \hss\textit{\authorsshort}\hss}
\def\@evenfoot{}
}
\begin{document}

\maketitle

\pagestyle{plain}

\begin{center}
\noindent
\begin{minipage}{0.85\textwidth}


\bigskip

{\small{
\noindent {\bf Abstract.}
It is known that difference equations
generated as the Newton--Raphson iteration
for quadratic equations are solvable in closed form,
and the solution can be constructed from
linear three-term recurrence relations with constant coefficients.
We show that the same construction for four-term recurrence
relations gives the solution to the initial value problem
of difference equations similar to the Newton--Raphson iteration
for cubic equations.
In many cases,
the solution converges to a root of the cubic equation
and the convergence rate is quadratic.
  }
\smallskip

\noindent {\bf{Keywords:}} solvable difference equations, linear recurrence relations, Newton--Raphson method.
\smallskip

\noindent{\bf{2020 Mathematics Subject Classification:}} 37N30, 39A06, 65H04.
}

\end{minipage}
\end{center}

\section{Introduction} 

For a given function $f\colon \mathbf C\to\mathbf C$,
let us consider the equation $f(z)=0$.
To solve this equation numerically,
we often use the Newton--Raphson iteration method
\begin{equation*}
  z_{n+1}=z_n-\frac{f(z_n)}{f'(z_n)},\quad
  n=0, 1, 2, \dots.
\end{equation*}
In many cases, $z_n$ converges to a solution of the equation $f(z)=0$ as $n\to\infty$, where the convergence rate is quadratic.

For example, in the case of $f(z)=z^2+c_1z+c_2$, where $c_1, c_2 \in \mathbf C$ are some constants,
the Newton--Raphson iteration is
\begin{equation}\label{eq:newton-quad}
  z_{n+1}=z_n-\frac{z_n^2+c_1z_n+c_2}{2z_n+c_1},\quad
  n=0, 1, 2, \dots.
\end{equation}
It is known that the difference equation \eqref{eq:newton-quad} is
solvable in closed form (e.g., Stevi\'c~\cite{stevic2023scs} reviewed the solution and discussed several related topics).
Suppose that $c_1^2-4c_2\ne 0$, i.e., the quadratic equation
$z^2+c_1z+c_2=0$ has two distinct roots $\lambda_0$ and $\lambda_1$.
Then, the solution to the initial value problem of the difference
equation \eqref{eq:newton-quad} is given by
\begin{equation}\label{eq:sol-newton-quad}
  z_n=\frac{\lambda_1(z_0-\lambda_0)^{2^n}-\lambda_0(z_0-\lambda_1)^{2^n}}{(z_0-\lambda_0)^{2^n}-(z_0-\lambda_1)^{2^n}}.
\end{equation}
This solution explicitly shows that $z_n$ goes to a root of
the quadratic equation as $n\to \infty$ and
the convergence rate is quadratic.

In the case of $f(z)=z^3+c_1z^2+c_2z+c_3$, where $c_1, c_2, c_3\in\mathbf C$
are some constants, the Newton--Raphson iteration is
\begin{equation}\label{eq:newton-cubic}
  z_{n+1}=z_n-\frac{z_n^3+c_1z_n^2+c_2z_n+c_3}{3z_n^2+2c_1z_n+c_2},\quad
  n=0, 1, 2, \dots.
\end{equation}
In contrast to the quadratic case,
the difference equation \eqref{eq:newton-cubic} does not seem
to have closed form solutions.

A proof of the fact that \eqref{eq:sol-newton-quad} is the solution of the difference equation~\eqref{eq:newton-quad}
is direct substitution.
Another proof was given by Kondo and Nakamura~\cite{kondo2002dss} in which they constructed
the difference equation~\eqref{eq:newton-quad} from an addition formula for determinants of tridiagonal matrices,
which is a solution of a three-term linear recurrence relation.
In this paper, we will try to extend this construction to a four-term linear recurrence relation
and obtain a solvable iterative method for cubic equations which may be similar to the Newton--Raphson iteration.

This paper is organized as follows.
In section~\ref{sec:deriv-solut-newt},
we will review a construction of the solution to
the initial value problem of the difference equation
\eqref{eq:newton-quad}
using a three-term linear recurrence relation
presented by Kondo--Nakamura.
The method presented in this paper is simplified and systematized than the original.
Thanks to that, we can easily extend the method to another case.
In section~\ref{sec:constr-solv-diff},
we will extend the construction to a four-term linear
recurrence relation and obtain a system of difference equations
which is similar to the one generated by the Newton--Raphson
iteration method for cubic equations.
Section~\ref{sec:concluding-remarks} is concluding remarks.

\section{Derivation of the solution to the Newton--Raphson iteration for quadratic equations}\label{sec:deriv-solut-newt}

In this section, we briefly explain the construction of the solution
originally presented by Kondo--Nakamura~\cite{kondo2002dss}.

\subsection{Three-term recurrence relations and the Newton--Raphson iteration}

Let us consider a sequence $(\phi_n)_{n=0}^\infty$
defined by the following three-term recurrence relation
\begin{equation}\label{eq:three-term}
  \phi_{-1}\coloneq 0,\quad
  \phi_{0}\coloneq 1,\quad
  \phi_{n+1}\coloneq-\gamma_1\phi_n-\gamma_2\phi_{n-1},\quad
  n=0, 1, 2, \dots,
\end{equation}
where $\gamma_1, \gamma_2 \in \mathbf C$ are some constants.

\begin{theorem}[Kondo--Nakamura~\cite{kondo2002dss}]\label{thm:three-term}
  The sequence $(\phi_n)_{n=0}^\infty$
  defined by the three-term recurrence relation~\eqref{eq:three-term}
  satisfies the relation
  \begin{equation}\label{eq:three-term-2}
    \phi_{m+n}=\phi_m\phi_n-\gamma_2\phi_{m-1}\phi_{n-1}
  \end{equation}
  for any non-negative $m, n \in \mathbf Z$.
\end{theorem}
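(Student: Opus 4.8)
The plan is to prove the addition formula \eqref{eq:three-term-2} by fixing an arbitrary $m \geq 0$ and proceeding by induction on $n \geq 0$; as $m$ ranges over all non-negative integers this sweeps out every admissible pair. It is worth noting in passing that the right-hand side is symmetric in $m$ and $n$, consistent with the trivial symmetry $\phi_{m+n} = \phi_{n+m}$ of the left-hand side, which serves as a useful consistency check.

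First I would dispatch the two base cases, which are forced by the second order of the recurrence. For $n = 0$, using $\phi_0 = 1$ and $\phi_{-1} = 0$, the right-hand side of \eqref{eq:three-term-2} collapses to $\phi_m$, matching the left-hand side. For $n = 1$, computing $\phi_1 = -\gamma_1$ from \eqref{eq:three-term} and substituting, the right-hand side becomes $-\gamma_1\phi_m - \gamma_2\phi_{m-1}$, which is exactly $\phi_{m+1}$ by the defining recurrence. Both base cases only invoke the index $\phi_{m-1}$ with $m-1 \geq -1$, so no undefined term is touched.

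For the inductive step I would assume \eqref{eq:three-term-2} holds for $n-1$ and $n$ (with $n \geq 1$) and derive it for $n+1$. Applying the recurrence \eqref{eq:three-term} to the index $m+n+1 = (m+n)+1$ gives $\phi_{m+n+1} = -\gamma_1\phi_{m+n} - \gamma_2\phi_{m+n-1}$; substituting the two induction hypotheses and regrouping by the common factors $\phi_m$ and $\phi_{m-1}$ leaves the combinations $-\gamma_1\phi_n - \gamma_2\phi_{n-1}$ and $-\gamma_1\phi_{n-1} - \gamma_2\phi_{n-2}$, which reassemble via \eqref{eq:three-term} into $\phi_{n+1}$ and $\phi_n$ respectively. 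This yields $\phi_{m+n+1} = \phi_m\phi_{n+1} - \gamma_2\phi_{m-1}\phi_n$, completing the step. The only point requiring attention is that the term $\phi_{n-2}$ appears, so the step is legitimate precisely from $n = 1$ onward (where $n-2 \geq -1$), which is exactly why two base cases are required.

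There is no deep obstacle here: the argument is a routine second-order induction, and the main care is bookkeeping — keeping both base cases in place and applying the recurrence to $m+n+1$ rather than to $n+1$ alone. As a structural explanation of why such an identity must exist, one can encode \eqref{eq:three-term} as the matrix recursion $u_{n+1} = Mu_n$ with $M = \bigl(\begin{smallmatrix} -\gamma_1 & -\gamma_2 \\ 1 & 0 \end{smallmatrix}\bigr)$ and $u_n = (\phi_n, \phi_{n-1})^\top$, observe that $u_n$ is the first column of $M^n$, and read \eqref{eq:three-term-2} directly off the $(1,1)$ entry of the semigroup relation $M^{m+n} = M^m M^n$. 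This viewpoint is the natural springboard for the four-term generalization pursued later.
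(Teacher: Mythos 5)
Your proof is correct, but its primary route differs from the paper's. You fix $m$ and run a second-order induction on $n$: the base cases $n=0,1$ follow from $\phi_{-1}=0$, $\phi_0=1$, $\phi_1=-\gamma_1$, and the inductive step expands $\phi_{m+n+1}=-\gamma_1\phi_{m+n}-\gamma_2\phi_{m+n-1}$ via the two hypotheses and regroups it as $\phi_m(-\gamma_1\phi_n-\gamma_2\phi_{n-1})-\gamma_2\phi_{m-1}(-\gamma_1\phi_{n-1}-\gamma_2\phi_{n-2})=\phi_m\phi_{n+1}-\gamma_2\phi_{m-1}\phi_n$; your remark that the step is valid only from $n\ge 1$ (so that $\phi_{n-2}$ is defined and the recurrence applies) is exactly why two base cases are needed, and you supply both. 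The paper's proof is instead the matrix argument that you sketch only as a closing structural remark: it writes the recurrence as $\bigl(\begin{smallmatrix}\phi_{n+1}\\ \phi_n\end{smallmatrix}\bigr)=M\bigl(\begin{smallmatrix}\phi_n\\ \phi_{n-1}\end{smallmatrix}\bigr)$ with $M=\bigl(\begin{smallmatrix}-\gamma_1&-\gamma_2\\ 1&0\end{smallmatrix}\bigr)=\bigl(\begin{smallmatrix}\phi_1&-\gamma_2\phi_0\\ \phi_0&-\gamma_2\phi_{-1}\end{smallmatrix}\bigr)$, identifies $M^m$ entrywise with $\bigl(\begin{smallmatrix}\phi_m&-\gamma_2\phi_{m-1}\\ \phi_{m-1}&-\gamma_2\phi_{m-2}\end{smallmatrix}\bigr)$, and reads \eqref{eq:three-term-2} off the first row of $M^m\bigl(\begin{smallmatrix}\phi_n\\ \phi_{n-1}\end{smallmatrix}\bigr)$ --- precisely your $(1,1)$-entry-of-$M^{m+n}=M^mM^n$ observation. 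As for what each approach buys: your induction is more elementary and makes the role of the initial conditions completely explicit, but it is tailored to the three-term case; the matrix formulation is the one the paper reuses verbatim in Section~3, where the same computation with the $3\times 3$ companion matrix yields the four-term identity \eqref{eq:four-term-2} with no new ideas, whereas an inductive proof there would require three interleaved hypotheses and substantially heavier bookkeeping. Since you close by noting the matrix viewpoint yourself, you effectively possess both proofs; either one is complete.
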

\begin{proof}
  Since $\phi_{-1}=0$, $\phi_0=1$, and $\phi_1=-\gamma_1\phi_0-\gamma_2\phi_{-1}=-\gamma_1$,
  the three-term recurrence relation~\eqref{eq:three-term}
  is rewritten using matrices and vectors as
  \begin{equation*}
    \begin{pmatrix}
      \phi_{n+1}\\
      \phi_n
    \end{pmatrix}
    =
    \begin{pmatrix}
      -\gamma_1 & -\gamma_2\\
      1 & 0
    \end{pmatrix}
    \begin{pmatrix}
      \phi_n\\
      \phi_{n-1}
    \end{pmatrix}
    =
    \begin{pmatrix}
      \phi_1 & -\gamma_2\phi_0\\
      \phi_0 & -\gamma_2\phi_{-1}
    \end{pmatrix}
    \begin{pmatrix}
      \phi_n\\
      \phi_{n-1}
    \end{pmatrix}.
  \end{equation*}
  Hence, for any non-negative $m, n \in \mathbf Z$,
  we obtain
  \begin{align*}
    \begin{pmatrix}
      \phi_{m+n}\\
      \phi_{m+n-1}
    \end{pmatrix}
    &=
    \begin{pmatrix}
      -\gamma_1 & -\gamma_2\\
      1 & 0
    \end{pmatrix}^m
    \begin{pmatrix}
      \phi_n\\
      \phi_{n-1}
    \end{pmatrix}\\
    &=
    \begin{pmatrix}
      -\gamma_1 & -\gamma_2\\
      1 & 0
    \end{pmatrix}^{m-1}
    \begin{pmatrix}
      \phi_1 & -\gamma_2\phi_0\\
      \phi_0 & -\gamma_2\phi_{-1}
    \end{pmatrix}
    \begin{pmatrix}
      \phi_n\\
      \phi_{n-1}
    \end{pmatrix}\\
    &=
    \begin{pmatrix}
      \phi_{m} & -\gamma_2\phi_{m-1}\\
      \phi_{m-1} & -\gamma_2\phi_{m-2}
    \end{pmatrix}
    \begin{pmatrix}
      \phi_n\\
      \phi_{n-1}
    \end{pmatrix}\\
    &=
    \begin{pmatrix}
      \phi_m\phi_n-\gamma_2\phi_{m-1}\phi_{n-1}\\
      \phi_{m-1}\phi_n-\gamma_2\phi_{m-2}\phi_{n-1}
    \end{pmatrix}.
  \end{align*}
  This completes the proof.
\end{proof}

Let us introduce a variable $\xi_n\coloneq \frac{\phi_n}{\phi_{n-1}}$.
Then, since $\phi_n=\xi_n\phi_{n-1}$,
the three-term recurrence relation \eqref{eq:three-term}
gives $-\gamma_2\phi_{n-2}=\phi_n+\gamma_1\phi_{n-1}=(\xi_n+\gamma_1)\phi_{n-1}$.
By using the relation \eqref{eq:three-term-2} and these relations
\begin{equation*}
  \phi_n=\xi_n\phi_{n-1},\quad
  -\gamma_2\phi_{n-2}=(\xi_n+\gamma_1)\phi_{n-1},
\end{equation*}
we obtain
\begin{equation}\label{eq:xi-phi}
  \hspace{-0.5em}
  \xi_{m+n}
  =\frac{\phi_{m+n}}{\phi_{m+n-1}}
  =\frac{\phi_m\phi_n-\gamma_2\phi_{m-1}\phi_{n-1}}{\phi_{m-1}\phi_n-\gamma_2\phi_{m-2}\phi_{n-1}}
  =\frac{(\xi_m\xi_n-\gamma_2)\phi_{m-1}\phi_{n-1}}{\left(\xi_n+(\xi_m+\gamma_1)\right)\phi_{m-1}\phi_{n-1}}
  =\frac{\xi_m\xi_n-\gamma_2}{\xi_m+\xi_n+\gamma_1}.
\end{equation}
Especially, if $m=n$, then the relation yields
\begin{equation}\label{eq:xi-rec}
  \xi_{2n}=\frac{\xi_n^2-\gamma_2}{2\xi_n+\gamma_1},
\end{equation}
and let $x_n=\xi_{2^n}$, then we obtain
\begin{equation*}
  x_{n+1}=\frac{x_n^2-\gamma_2}{2x_n+\gamma_1}
  =x_n-\frac{x_n^2+\gamma_1x_n+\gamma_2}{2x_n+\gamma_1}.
\end{equation*}
This equation is the same form as the Newton--Raphson iteration
for quadratic equations~\eqref{eq:newton-quad}.
However,
since the initial value
$x_0=\xi_1=\frac{\phi_1}{\phi_0}=-\gamma_1$ is fixed,
$x_n=\xi_{2^n}=\frac{\phi_{2^n}}{\phi_{2^n-1}}$ cannot give
the solution for \emph{all} initial values.

\subsection{The solution to the initial value problem}
To obtain the solution for any initial value,
consider a variable transformation
$\xi_n=\eta_n-\eta_1+\xi_1=\eta_n-\eta_1-\gamma_1$,
where one can choose any constant $\eta_1 \in \mathbf C$.
Then, the equation~\eqref{eq:xi-rec} is transformed to
\begin{equation*}
  \eta_{2n}=\eta_n-\frac{\eta_n^2+(-2\eta_1-\gamma_1)\eta_n+\eta_1^2+\gamma_1\eta_1+\gamma_2}{2\eta_n-2\eta_1-\gamma_1}.
\end{equation*}
Hence, setting
\begin{equation}\label{eq:quad-c}
  c_1\coloneq -2\eta_1-\gamma_1,\quad
  c_2\coloneq \eta_1^2+\gamma_1\eta_1+\gamma_2,\quad
  z_n\coloneq \eta_{2^n}
\end{equation}
yields
\begin{equation*}
  z_{n+1}=z_n-\frac{z_n^2+c_1z_n+c_2}{2z_n+c_1}.
\end{equation*}
Since one can choose any initial value $z_0=\eta_1$,
the solution to the initial value problem is given by
\begin{equation}\label{eq:quad-sol}
  z_n=\eta_{2^n}=\xi_{2^n}+\eta_1+\gamma_1=\frac{\phi_{2^n}}{\phi_{2^n-1}}+z_0+\gamma_1.
\end{equation}

Let us solve the three-term recurrence relation~\eqref{eq:three-term},
using the theory of linear difference equations with constant coefficients
(see a textbook of difference equations, e.g. Elaydi~\cite{elaydi2005ide}),
in terms of roots of the quadratic equation $z^2+c_1z+c_2=0$.
First, suppose that $c_1^2-4c_2\ne 0$ and
let $\lambda_0$ and $\lambda_1$ be
two distinct roots of the quadratic equation
$z^2+c_1z+c_2=0$.
In this case, since $c_1=-\lambda_0-\lambda_1$ and
$c_2=\lambda_0\lambda_1$, and from \eqref{eq:quad-c},
\begin{equation*}
  \gamma_1=-2z_0-c_1=-2z_0+\lambda_0+\lambda_1,\quad
  \gamma_2=z_0^2+c_1z_0+c_2=z_0^2+(-\lambda_0-\lambda_1)z_0+\lambda_0\lambda_1.
\end{equation*}
Therefore, the characteristic equation of
the three-term recurrence relation~\eqref{eq:three-term} is factored as
\begin{align*}
  z^2+\gamma_1z+\gamma_2
  &=z^2+(-2z_0+\lambda_0+\lambda_1)z+z_0^2+(-\lambda_0-\lambda_1)z_0+\lambda_0\lambda_1\\
  &=(z-(z_0-\lambda_0))(z-(z_0-\lambda_1))=0.
\end{align*}
This implies that $\phi_n$ is of the form
\begin{equation*}
  \phi_n=C_0(z_0-\lambda_0)^n+C_1(z_0-\lambda_1)^n,
\end{equation*}
where the coefficients $C_0$ and $C_1$ are determined by
\begin{equation*}
  \phi_0=C_0+C_1=1,\quad
  \phi_1=C_0(z_0-\lambda_0)+C_1(z_0-\lambda_1)=-\gamma_1=2z_0-\lambda_0-\lambda_1,
\end{equation*}
i.e.
\begin{equation*}
  C_0=\frac{z_0-\lambda_0}{\lambda_1-\lambda_0},\quad
  C_1=-\frac{z_0-\lambda_1}{\lambda_1-\lambda_0}.
\end{equation*}
Now we have
\begin{equation*}
  \phi_n=\frac{(z_0-\lambda_0)^{n+1}-(z_0-\lambda_1)^{n+1}}{\lambda_1-\lambda_0}
\end{equation*}
and substituting it into \eqref{eq:quad-sol} yields
\begin{equation*}
  z_n
  =\frac{(z_0-\lambda_0)^{2^n+1}-(z_0-\lambda_1)^{2^n+1}}{(z_0-\lambda_0)^{2^n}-(z_0-\lambda_1)^{2^n}}+z_0-2z_0+\lambda_0+\lambda_1
  =\frac{\lambda_1(z_0-\lambda_0)^{2^n}-\lambda_0(z_0-\lambda_1)^{2^n}}{(z_0-\lambda_0)^{2^n}-(z_0-\lambda_1)^{2^n}}.
\end{equation*}
If $|z_0-\lambda_0|<|z_0-\lambda_1|$,
then
\begin{equation*}
  z_n=\frac{\lambda_1\left(\frac{z_0-\lambda_0}{z_0-\lambda_1}\right)^{2^n}-\lambda_0}{\left(\frac{z_0-\lambda_0}{z_0-\lambda_1}\right)^{2^n}-1}\to \lambda_0\quad
  \text{as $n\to\infty$},
\end{equation*}
i.e. $z_n$ converges to the nearest root from the initial value $z_0$
and the convergence rate is quadratic.
If $|z_0-\lambda_0|=|z_0-\lambda_1|$,
then $z_n$ does not converge.

Next, suppose that $c_1^2-4c_2=0$ and let $\lambda$ be
a unique root of the quadratic equation $z^2+c_1z+c_2=0$.
In this case, the characteristic equation of
the three-term recurrence relation~\eqref{eq:three-term} is
\begin{equation*}
  z^2+\gamma_1 z+\gamma_2
  =(z-(z_0-\lambda))^2=0.
\end{equation*}
This implies that $\phi_n$ is of the form
\begin{equation*}
  \phi_n=C_0(z_0-\lambda)^n+C_1 n(z_0-\lambda)^{n-1},
\end{equation*}
where the coefficients $C_0$ and $C_1$ are determined by
\begin{equation*}
  \phi_0=C_0=1,\quad
  \phi_1=C_0(z_0-\lambda)+C_1=2z_0-2\lambda,
\end{equation*}
i.e.
\begin{equation*}
  C_0=1,\quad C_1=z_0-\lambda.
\end{equation*}
Now we have
\begin{equation*}
  \phi_n=(1+n)(z_0-\lambda)^n,
\end{equation*}
and substituting it into \eqref{eq:quad-sol} yields
\begin{equation*}
  z_n
  =\frac{(1+2^n)(z_0-\lambda)^{2^n}}{(1+2^n-1)(z_0-\lambda)^{2^n-1}}+z_0-2z_0+2\lambda
  =\frac{(2^n-1)\lambda+z_0}{2^n}.
\end{equation*}
Hence, $z_n\to \lambda$ as $n\to\infty$
and the convergence rate is linear.

\begin{remark}
  Kondo--Nakamura~\cite{kondo2002dss} pointed out that
  we can construct solutions to
  the iterative root finding methods which have higher order convergence rates,
  known as Halley's method, Kiss's method, Kiss--Nourein's method, ..., for quadratic equations
  in the same manner.

  For example, the Halley's iteration for the equation $f(z)=0$ is given by
  \begin{equation*}
    z_{n+1}=z_n-\frac{2f(z_n)f'(z_n)}{2(f'(z_n))^2-f(z_n)f''(z_n)}.
  \end{equation*}
  The iteration for the quadratic case $f(z)=z^2+c_1z+c_2$ is
  \begin{equation}\label{eq:halley-quad}
    z_{n+1}=z_n-\frac{2z_n^3+3c_1z_n^2+(c_1^2+2c_2)z_n+c_1c_2}{3z_n^2+3c_1z_n+c_1^2-c_2}.
  \end{equation}
  On the other hand, setting $m=2n$ in the equation~\eqref{eq:xi-phi} yields
  \begin{equation*}
    \xi_{3n}=\frac{\xi_{2n}\xi_n-\gamma_2}{\xi_{2n}+\xi_n+\gamma_1}
    =\xi_n-\frac{\tilde f(\xi_n)}{\xi_{2n}+\tilde f'(\xi_n)-\xi_n},
  \end{equation*}
  where $\tilde f(\xi_n)=\xi_n^2+\gamma_1\xi_n+\gamma_2$.
  Since~\eqref{eq:xi-rec} is rewritten as
  \begin{equation*}
    \xi_{2n}=\frac{\xi_n\tilde f'(\xi_n)-\tilde f(\xi_n)}{\tilde f'(\xi_n)}
  \end{equation*}
  and $\tilde f''(\xi_n)=2$, we obtain
  \begin{equation*}
    \xi_{3n}
    =\xi_n-\frac{\tilde f(\xi_n)\tilde f'(\xi_n)}{\xi_n\tilde f'(\xi_n)-\tilde f(\xi_n)+(\tilde f'(\xi_n))^2-\xi_n\tilde f'(\xi_n)}
    =\xi_n-\frac{2\tilde f(\xi_n)\tilde f'(\xi_n)}{2(\tilde f'(\xi_n))^2-\tilde f(\xi_n)\tilde f''(\xi_n)}.
  \end{equation*}
  Hence, we can readily obtain the solution to the initial value problem of the difference equation~\eqref{eq:halley-quad}:
  if $c_1^2-4c_2\ne 0$, then
  \begin{equation*}
    z_n=\frac{\lambda_1(z_0-\lambda_0)^{3^n}-\lambda_0(z_0-\lambda_1)^{3^n}}{(z_0-\lambda_0)^{3^n}-(z_0-\lambda_1)^{3^n}},
  \end{equation*}
  and if $c_1^2-4c_2=0$, then
  \begin{equation*}
    z_n=\frac{(3^n-1)\lambda+z_0}{3^n}.
  \end{equation*}
\end{remark}

\section{Construction of solvable difference equations similar to the Newton--Raphson iteration for cubic equations}\label{sec:constr-solv-diff}

In this section, we extend the construction explained in the previous section to the case of cubic equations.

\subsection{Four-term recurrence relations and rational difference equations}

Let us consider a sequence $(\phi_n)_{n=0}^\infty$ defined
by the following four-term recurrence relation
\begin{equation}\label{eq:four-term}
  \phi_{-2}=\phi_{-1}\coloneq 0,\quad
  \phi_0\coloneq 1,\quad
  \phi_{n+1}\coloneq-\gamma_1\phi_n-\gamma_2\phi_{n-1}-\gamma_3\phi_{n-2},\quad
  n=0, 1, 2, \dots,
\end{equation}
where $\gamma_1, \gamma_2, \gamma_3\in\mathbf C$ are some constants.
In addition, to prove the following theorem,
we set $\phi_{-3}\coloneq-\frac{1}{\gamma_3}$,
which is consistent with the four-term recurrence relation.

\begin{theorem}
  The sequence $(\phi_n)_{n=0}^\infty$ defined by
  the four-term recurrence relation~\eqref{eq:four-term}
  satisfies the relation
  \begin{equation}\label{eq:four-term-2}
    \phi_{m+n}=\phi_m\phi_n-\gamma_2\phi_{m-1}\phi_{n-1}-\gamma_3\phi_{m-2}\phi_{n-1}-\gamma_3\phi_{m-1}\phi_{n-2}
  \end{equation}
  for any non-negative $m, n\in\mathbf Z$.
\end{theorem}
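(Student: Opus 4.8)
The plan is to reproduce the matrix argument from the proof of Theorem~\ref{thm:three-term}, now with the $3\times 3$ companion matrix of the four-term recurrence. First I would rewrite \eqref{eq:four-term} as the vector recurrence
\[
  \begin{pmatrix}\phi_{n+1}\\\phi_n\\\phi_{n-1}\end{pmatrix}
  =M\begin{pmatrix}\phi_n\\\phi_{n-1}\\\phi_{n-2}\end{pmatrix},
  \qquad
  M=\begin{pmatrix}-\gamma_1 & -\gamma_2 & -\gamma_3\\ 1 & 0 & 0\\ 0 & 1 & 0\end{pmatrix},
\]
so that iterating yields $M^k(\phi_n,\phi_{n-1},\phi_{n-2})^{\mathrm T}=(\phi_{n+k},\phi_{n+k-1},\phi_{n+k-2})^{\mathrm T}$ for every $k\ge 0$ and every integer $n\ge -1$; in particular, taking $k=m$ gives $(\phi_{m+n},\phi_{m+n-1},\phi_{m+n-2})^{\mathrm T}=M^m(\phi_n,\phi_{n-1},\phi_{n-2})^{\mathrm T}$.

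The decisive step is to express $M$ itself through the sequence. Using $\phi_1=-\gamma_1$, $\phi_0=1$, $\phi_{-1}=\phi_{-2}=0$ together with the prescribed value $\phi_{-3}=-1/\gamma_3$, one verifies that the three columns of $M$ are exactly
\[
  \begin{pmatrix}\phi_1\\\phi_0\\\phi_{-1}\end{pmatrix},
  \quad
  -\gamma_2\begin{pmatrix}\phi_0\\\phi_{-1}\\\phi_{-2}\end{pmatrix}
  -\gamma_3\begin{pmatrix}\phi_{-1}\\\phi_{-2}\\\phi_{-3}\end{pmatrix},
  \quad
  -\gamma_3\begin{pmatrix}\phi_0\\\phi_{-1}\\\phi_{-2}\end{pmatrix}.
\]
Then I would write $M^m=M^{m-1}M$ and apply the intertwining relation above to each of these $\phi$-vectors, so that the columns of $M^m$ come out in closed form. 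Reading off the first row gives
\[
  \phi_{m+n}=\phi_m\phi_n+(-\gamma_2\phi_{m-1}-\gamma_3\phi_{m-2})\phi_{n-1}+(-\gamma_3\phi_{m-1})\phi_{n-2},
\]
which is precisely \eqref{eq:four-term-2}; note that only the top entries $\phi_m$, $-\gamma_2\phi_{m-1}-\gamma_3\phi_{m-2}$, $-\gamma_3\phi_{m-1}$ of the three columns are actually needed.

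I expect the main obstacle to be the bookkeeping of the negative-index initial data, which is exactly what the hypothesis $\phi_{-3}=-1/\gamma_3$ is designed to handle. It enters in two linked places: decomposing the middle column of $M$, whose bottom entry is $1$, forces $-\gamma_2\phi_{-2}-\gamma_3\phi_{-3}=1$ and hence $\phi_{-3}=-1/\gamma_3$; and applying $M^{m-1}$ to the vector $(\phi_{-1},\phi_{-2},\phi_{-3})^{\mathrm T}$ requires the intertwining relation to be valid at $n=-1$, i.e. that the recurrence extends one step into the negatives, which is consistent precisely because $-\gamma_3\phi_{-3}=\phi_0=1$. Once these boundary terms are pinned down, what remains is the routine $3\times 3$ matrix multiplication paralleling the three-term case, so I anticipate no essential difficulty beyond handling these endpoints correctly.
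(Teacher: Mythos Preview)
Your proposal is correct and follows essentially the same route as the paper: rewrite the recurrence via the $3\times 3$ companion matrix $M$, recognize the columns of $M$ as $\phi$-vectors using the convention $\phi_{-3}=-1/\gamma_3$, and then read off the first row of $M^m$ acting on $(\phi_n,\phi_{n-1},\phi_{n-2})^{\mathrm T}$. Your discussion of the role of $\phi_{-3}$ is more explicit than the paper's, but the argument is the same.
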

\begin{proof}
  In the same manner as the proof of Theorem~\ref{thm:three-term},
  the four-term recurrence relation~\eqref{eq:four-term}
  is rewritten as
  \begin{equation*}
    \begin{pmatrix}
      \phi_{n+1}\\
      \phi_{n}\\
      \phi_{n-1}
    \end{pmatrix}=
    \begin{pmatrix}
      -\gamma_1 & -\gamma_2 & -\gamma_3\\
      1 & 0 & 0\\
      0 & 1 & 0
    \end{pmatrix}
    \begin{pmatrix}
      \phi_{n}\\
      \phi_{n-1}\\
      \phi_{n-2}
    \end{pmatrix}
    =
    \begin{pmatrix}
      \phi_1 & -\gamma_2\phi_0-\gamma_3\phi_{-1} & -\gamma_3\phi_0\\
      \phi_0 & -\gamma_2\phi_{-1}-\gamma_3\phi_{-2} & -\gamma_3\phi_{-1}\\
      \phi_{-1} & -\gamma_2\phi_{-2}-\gamma_3\phi_{-3} & -\gamma_3\phi_{-2}
    \end{pmatrix}
    \begin{pmatrix}
      \phi_{n}\\
      \phi_{n-1}\\
      \phi_{n-2}
    \end{pmatrix}.
  \end{equation*}
  Hence, we obtain
  \begin{equation*}
    \begin{pmatrix}
      \phi_{m+n}\\
      \phi_{m+n-1}\\
      \phi_{m+n-2}
    \end{pmatrix}
    =
    \begin{pmatrix}
      \phi_m & -\gamma_2\phi_{m-1}-\gamma_3\phi_{m-2} & -\gamma_3\phi_{m-1}\\
      \phi_{m-1} & -\gamma_2\phi_{m-2}-\gamma_3\phi_{m-3} & -\gamma_3\phi_{m-2}\\
      \phi_{m-2} & -\gamma_2\phi_{m-3}-\gamma_3\phi_{m-4} & -\gamma_3\phi_{m-3}
    \end{pmatrix}
    \begin{pmatrix}
      \phi_n\\
      \phi_{n-1}\\
      \phi_{n-2}
    \end{pmatrix}.
  \end{equation*}
  This completes the proof.
\end{proof}

Let us introduce a variable $\xi_n\coloneq\frac{\phi_n}{\phi_{n-1}}$.
Then, the four-term recurrence relation~\eqref{eq:four-term} gives
$-\gamma_3\phi_{n-2}=\phi_{n+1}+\gamma_1\phi_n+\gamma_2\phi_{n-1}=(\xi_{n+1}\xi_n+\gamma_1\xi_n+\gamma_2)\phi_{n-1}$.
By using the relation~\eqref{eq:four-term-2} and
these relations
\begin{equation*}
  \phi_n=\xi_n\phi_{n-1},\quad
  -\gamma_3\phi_{n-2}=(\xi_{n+1}\xi_n+\gamma_1\xi_n+\gamma_2)\phi_{n-1},
\end{equation*}
we obtain
\begin{align*}
  \xi_{m+n+1}
  &=\frac{\phi_{m+n+1}}{\phi_{m+n}}\\
  &=\frac{\phi_{m+1}\phi_n-\gamma_2\phi_{m}\phi_{n-1}-\gamma_3\phi_{m-1}\phi_{n-1}-\gamma_3\phi_{m}\phi_{n-2}}{\phi_m\phi_n-\gamma_2\phi_{m-1}\phi_{n-1}-\gamma_3\phi_{m-2}\phi_{n-1}-\gamma_3\phi_{m-1}\phi_{n-2}}\\
  &=\frac{(\xi_{m+1}\xi_m\xi_n-\gamma_2\xi_m-\gamma_3+\xi_m(\xi_{n+1}\xi_n+\gamma_1\xi_n+\gamma_2))\phi_{m-1}\phi_{n-1}}{(\xi_m\xi_n-\gamma_2+(\xi_{m+1}\xi_m+\gamma_1\xi_m+\gamma_2)+(\xi_{n+1}\xi_n+\gamma_1\xi_n+\gamma_2))\phi_{m-1}\phi_{n-1}}\\
  &=\frac{\xi_{m+1}\xi_m\xi_n+\xi_{n+1}\xi_m\xi_n+\gamma_1\xi_m\xi_n-\gamma_3}{\xi_{m+1}\xi_m+\xi_{n+1}\xi_n+\xi_m\xi_n+\gamma_1(\xi_m+\xi_n)+\gamma_2}.
\end{align*}
Especially, if $m=n$, then the relation yields
\begin{equation}\label{eq:xi-cubic-rec-1}
  \xi_{2n+1}
  =\frac{2\xi_{n+1}\xi_n^2+\gamma_1\xi_n^2-\gamma_3}{(2\xi_{n+1}+\xi_n)\xi_n+2\gamma_1\xi_n+\gamma_2}
  =\xi_n-\frac{\xi_n^3+\gamma_1\xi_n^2+\gamma_2\xi_n+\gamma_3}{3\xi_n^2+2\gamma_1\xi_n+\gamma_2+2(\xi_{n+1}-\xi_n)\xi_n}.
\end{equation}

Since the right hand side of the relation~\eqref{eq:xi-cubic-rec-1}
depends on $\xi_n$ and $\xi_{n+1}$,
we cannot compute the evolution of $\xi_n$ using only the relation~\eqref{eq:xi-cubic-rec-1}.
Therefore, we seek to find an additional relation:
\begin{align*}
  \xi_{m+n+2}
  &=\frac{\phi_{m+n+2}}{\phi_{m+n+1}}\\
  &=\frac{\phi_{m+1}\phi_{n+1}-\gamma_2\phi_{m}\phi_{n}-\gamma_3\phi_{m-1}\phi_{n}-\gamma_3\phi_{m}\phi_{n-1}}{\phi_{m+1}\phi_n-\gamma_2\phi_{m}\phi_{n-1}-\gamma_3\phi_{m-1}\phi_{n-1}-\gamma_3\phi_{m}\phi_{n-2}}\\
  &=\frac{(\xi_{m+1}\xi_{n+1}\xi_m\xi_n-\gamma_2\xi_m\xi_n-\gamma_3\xi_n-\gamma_3\xi_m)\phi_{m-1}\phi_{n-1}}{(\xi_{m+1}\xi_m\xi_n-\gamma_2\xi_m-\gamma_3+\xi_m(\xi_{n+1}\xi_n+\gamma_1\xi_n+\gamma_2))\phi_{m-1}\phi_{n-1}}\\
  &=\frac{\xi_{m+1}\xi_{n+1}\xi_m\xi_n-\gamma_2\xi_m\xi_n-\gamma_3(\xi_m+\xi_n)}{\xi_{m+1}\xi_m\xi_n+\xi_{n+1}\xi_m\xi_n+\gamma_1\xi_m\xi_n-\gamma_3}.
\end{align*}
Setting $m=n$ yields
\begin{equation*}
  \xi_{2n+2}=\frac{\xi_{n+1}^2\xi_n^2-\gamma_2\xi_n^2-2\gamma_3\xi_n}{2\xi_{n+1}\xi_n^2+\gamma_1\xi_n^2-\gamma_3}.
\end{equation*}
Furthermore, by using the relation \eqref{eq:xi-cubic-rec-1},
we can rewrite this relation as
\begin{align*}
  \xi_{2n+2}\xi_{2n+1}
  &=\frac{\xi_{n+1}^2\xi_n-\gamma_2\xi_n-2\gamma_3}{(2\xi_{n+1}+\xi_n)\xi_n+2\gamma_1\xi_n+\gamma_2}\xi_n\\
  &=\left(\frac{\xi_{n+1}^2\xi_n+2\xi_n^3+2\gamma_1\xi_n^2+\gamma_2\xi_n}{(2\xi_{n+1}+\xi_n)\xi_n+2\gamma_1\xi_n+\gamma_2}+2(\xi_{2n+1}-\xi_n)\right)\xi_n\\
  &=\left(\frac{\xi_{n+1}^2\xi_n-2\xi_{n+1}\xi_n^2+\xi_n^3}{(2\xi_{n+1}+\xi_n)\xi_n+2\gamma_1\xi_n+\gamma_2}+2\xi_{2n+1}-\xi_n\right)\xi_n\\
  &=\frac{((\xi_{n+1}-\xi_n)\xi_n)^2}{3\xi_n^2+2\gamma_1\xi_n+\gamma_2+2(\xi_{n+1}-\xi_n)\xi_n}+2\xi_{2n+1}\xi_n-\xi_n^2
\end{align*}
Hence, we obtain
\begin{equation*}
  (\xi_{2n+2}-\xi_{2n+1})\xi_{2n+1}
  =\frac{((\xi_{n+1}-\xi_n)\xi_n)^2}{3\xi_n^2+2\gamma_1\xi_n+\gamma_2+2(\xi_{n+1}-\xi_n)\xi_n}-(\xi_{2n+1}-\xi_n)^2.
\end{equation*}
Let us introduce a new variable $\rho_{n}\coloneq (\xi_{n+1}-\xi_n)\xi_n$.
Then, we obtain the following system of difference equations
\begin{subequations}\label{eq:xi-rho}
\begin{align}
  \xi_{2n+1}
  &=\xi_n-\frac{\xi_n^3+\gamma_1\xi_n^2+\gamma_2\xi_n+\gamma_3}{3\xi_n^2+2\gamma_1\xi_n+\gamma_2+2\rho_n},\\
  \rho_{2n+1}
  &=\frac{\rho_n^2}{3\xi_n^2+2\gamma_1\xi_n+\gamma_2+2\rho_n}-(\xi_{2n+1}-\xi_n)^2.
\end{align}
\end{subequations}
Note that the initial values are fixed to $\xi_1=\frac{\phi_1}{\phi_0}=-\gamma_1$ and
$\rho_1=\frac{\phi_2}{\phi_0}-\left(\frac{\phi_1}{\phi_0}\right)^2=-\gamma_2$.

\subsection{The solution to the initial value problem}
To obtain the solution for any initial value,
consider a variable transformation $\xi_n=\eta_n-\eta_1+\xi_1=\eta_n-\eta_1-\gamma_1$,
where one can choose any constant $\eta_1\in\mathbf C$.
Then, we can verify that the relations
\begin{subequations}
\begin{gather}
  \xi_n^3+\gamma_1\xi_n^2+\gamma_2\xi_n+\gamma_3
  =\eta_n^3+c_1\eta_n^2+c_2\eta_n+c_3,\label{eq:cubic-rec-1}\\
  3\xi_n^2+2\gamma_1\xi_n+\gamma_2
  =3\eta_n^2+2c_1\eta_n+c_2\label{eq:cubic-rec-2}
\end{gather}
\end{subequations}
hold, where
\begin{subequations}\label{eq:cubic-coef}
  \begin{gather}
    c_1\coloneq -3\eta_1-2\gamma_1,\\
    c_2\coloneq 3\eta_1^2+4\gamma_1\eta_1+\gamma_1^2+\gamma_2,\\
    c_3\coloneq -\eta_1^3-2\gamma_1\eta_1^2-(\gamma_1^2+\gamma_2)\eta_1-\gamma_1\gamma_2+\gamma_3.
  \end{gather}
\end{subequations}
Note that the relation~\eqref{eq:cubic-rec-2} is derived immediately
by differentiating both sides of the relation~\eqref{eq:cubic-rec-1}
with respect to $\xi_n$ or $\eta_n$.
Hence, the system~\eqref{eq:xi-rho} is transformed into
\begin{align*}
  \eta_{2n+1}
  &=\eta_n-\frac{\eta_n^3+c_1\eta_n^2+c_2\eta_n+c_3}{3\eta_n^2+2c_1\eta_n+c_2+2\rho_n},\\
  \rho_{2n+1}
  &=\frac{\rho_n^2}{3\eta_n^2+2c_1\eta_n+c_2+2\rho_n}-(\eta_{2n+1}-\eta_n)^2.
\end{align*}
Let $z_n\coloneq\eta_{2^{n+1}-1}$ and $r_n\coloneq \rho_{2^{n+1}-1}$, then the system becomes
\begin{subequations}\label{eq:zr-rec}
  \begin{align}
    z_{n+1}
    &=z_n-\frac{z_n^3+c_1z_n^2+c_2z_n+c_3}{3z_n^2+2c_1z_n+c_2+2r_n},\label{eq:zr-rec-1}\\
    r_{n+1}
    &=\frac{r_n^2}{3z_n^2+2c_1z_n+c_2+2r_n}-(z_{n+1}-z_n)^2.
  \end{align}
\end{subequations}
The equation~\eqref{eq:zr-rec-1} is very similar to the equation~\eqref{eq:newton-cubic},
which is the Newton-Raphson iteration for cubic equations.
The term ${}+2r_n$ in the denominator of the equation~\eqref{eq:zr-rec-1}
is considered to be a correction term for making the equation solvable.
One can choose any initial value $z_0=\eta_1$ and $r_0$ is determined as
$r_0=\rho_1=-\gamma_2$, which depends on $z_0$, $c_1$ and $c_2$ (see below).
The solution to the initial value problem is given by
\begin{subequations}\label{eq:zr-presol}
\begin{gather}
  z_n=\eta_{2^{n+1}-1}
  =\xi_{2^{n+1}-1}+\eta_1+\gamma_1=\frac{\phi_{2^{n+1}-1}}{\phi_{2^{n+1}-2}}+z_0+\gamma_1,\\
  r_n=\rho_{2^{n+1}-1}
  =(\xi_{2^{n+1}}-\xi_{2^{n+1}-1})\xi_{2^{n+1}-1}
  =\frac{\phi_{2^{n+1}}}{\phi_{2^{n+1}-2}}-\left(\frac{\phi_{2^{n+1}-1}}{\phi_{2^{n+1}-2}}\right)^2.
\end{gather}
\end{subequations}

Let us solve the four-term recurrence relation~\eqref{eq:four-term}
in terms of roots of the cubic equation $z^3+c_1z^2+c_2z+c_3=0$.
First, we calculate the reciprocal relations of \eqref{eq:cubic-coef}:
\begin{subequations}\label{eq:gamma-c}
  \begin{gather}
    \gamma_1=-\frac{3z_0+c_1}{2},\\
    \gamma_2=\frac{3z_0^2+2c_1z_0-c_1^2+4c_2}{4},\\
    \gamma_3=-\frac{z_0^3+c_1z_0^2-(c_1^2-4c_2)z_0-c_1^3+4c_1c_2-8c_3}{8}.
  \end{gather}
\end{subequations}
Let $\lambda_0$, $\lambda_1$ and $\lambda_2$ be the three roots of the cubic equation.
Substituting $c_1=-\lambda_0-\lambda_1-\lambda_2$,
$c_2=\lambda_0\lambda_1+\lambda_1\lambda_2+\lambda_2\lambda_0$,
and $c_3=-\lambda_0\lambda_1\lambda_2$ into \eqref{eq:gamma-c} yields
\begin{gather*}
  \gamma_1=-\frac{3z_0-\lambda_0-\lambda_1-\lambda_2}{2},\\
  \gamma_2=\frac{3z_0^2+(-2\lambda_0-2\lambda_1-2\lambda_2)z_0-\lambda_0^2-\lambda_1^2-\lambda_2^2+2\lambda_0\lambda_1+2\lambda_1\lambda_2+2\lambda_2\lambda_0}{4},\\
  \begin{multlined}[t]
    \gamma_3=-\frac{1}{8}(z_0^3+(-\lambda_0-\lambda_1-\lambda_2)z_0^2+(-\lambda_0^2-\lambda_1^2-\lambda_2^2+2\lambda_0\lambda_1+2\lambda_1\lambda_2+2\lambda_2\lambda_0)z_0\\
    +\lambda_0^3+\lambda_1^3+\lambda_2^3-\lambda_0^2\lambda_1-\lambda_0\lambda_1^2-\lambda_1^2\lambda_2-\lambda_1\lambda_2^2-\lambda_2^2\lambda_0-\lambda_2\lambda_0^2+2\lambda_0\lambda_1\lambda_2).
  \end{multlined}
\end{gather*}
Then, one can find that the characteristic equation of the four-term recurrence
relation~\eqref{eq:four-term} $z^3+\gamma_1z^2+\gamma_2z+\gamma_3=0$ is factored into
$(z-\Lambda_0)(z-\Lambda_1)(z-\Lambda_2)=0$, where
\begin{equation*}
  \Lambda_0\coloneq\frac{z_0+\lambda_0-\lambda_1-\lambda_2}{2},\quad
  \Lambda_1\coloneq\frac{z_0+\lambda_1-\lambda_2-\lambda_0}{2},\quad
  \Lambda_2\coloneq\frac{z_0+\lambda_2-\lambda_0-\lambda_1}{2}.
\end{equation*}

We divide into cases.

\paragraph{Case 1}
If all the roots $\lambda_0$, $\lambda_1$, and $\lambda_2$ are distinct,
then $\phi_n$ is of the form
\begin{equation*}
  \phi_n=C_0\Lambda_0^n+C_1\Lambda_1^n+C_2\Lambda_2^n,
\end{equation*}
where the coefficients $C_0$, $C_1$ and $C_2$ are determined by
$\phi_0=1$, $\phi_1=-\gamma_1=\Lambda_0+\Lambda_1+\Lambda_2$
and $\phi_2=\gamma_1^2-\gamma_2=\Lambda_0^2+\Lambda_1^2+\Lambda_2^2+\Lambda_0\Lambda_1+\Lambda_1\Lambda_2+\Lambda_2\Lambda_0$, i.e.
\begin{equation*}
  \begin{pmatrix}
    1 & 1 & 1 \\
    \Lambda_0 & \Lambda_1 & \Lambda_2\\
    \Lambda_0^2 & \Lambda_1^2 & \Lambda_2^2
  \end{pmatrix}
  \begin{pmatrix}
    C_0\\
    C_1\\
    C_2
  \end{pmatrix}
  =
  \begin{pmatrix}
    1 \\
    \Lambda_0+\Lambda_1+\Lambda_2\\
    \Lambda_0^2+\Lambda_1^2+\Lambda_2^2+\Lambda_0\Lambda_1+\Lambda_1\Lambda_2+\Lambda_2\Lambda_0
  \end{pmatrix}.
\end{equation*}
The solution of this equation is
\begin{equation*}
  C_0=\frac{\Lambda_0^2}{(\Lambda_0-\Lambda_1)(\Lambda_0-\Lambda_2)},\quad
  C_1=\frac{\Lambda_1^2}{(\Lambda_1-\Lambda_0)(\Lambda_1-\Lambda_2)},\quad
  C_2=\frac{\Lambda_2^2}{(\Lambda_2-\Lambda_0)(\Lambda_2-\Lambda_1)}.
\end{equation*}
In this case, the solution~\eqref{eq:zr-presol} becomes
\begin{align*}
  z_n
  &=\frac{C_0\Lambda_0^{2^{n+1}-1}+C_1\Lambda_1^{2^{n+1}-1}+C_2\Lambda_2^{2^{n+1}-1}}{C_0\Lambda_0^{2^{n+1}-2}+C_1\Lambda_1^{2^{n+1}-2}+C_2\Lambda_2^{2^{n+1}-2}}-\frac{z_0-\lambda_0-\lambda_1-\lambda_2}{2}\\
  &=\frac{C_0\lambda_0\Lambda_0^{2^{n+1}-2}+C_1\lambda_1\Lambda_1^{2^{n+1}-2}+C_2\lambda_2\Lambda_2^{2^{n+1}-2}}{C_0\Lambda_0^{2^{n+1}-2}+C_1\Lambda_1^{2^{n+1}-2}+C_2\Lambda_2^{2^{n+1}-2}},\\
  r_n&=
    \frac{C_0\Lambda_0^{2^{n+1}}+C_1\Lambda_1^{2^{n+1}}+C_2\Lambda_2^{2^{n+1}}}{C_0\Lambda_0^{2^{n+1}-2}+C_1\Lambda_1^{2^{n+1}-2}+C_2\Lambda_2^{2^{n+1}-2}}
    -\left(\frac{C_0\Lambda_0^{2^{n+1}-1}+C_1\Lambda_1^{2^{n+1}-1}+C_2\Lambda_2^{2^{n+1}-1}}{C_0\Lambda_0^{2^{n+1}-2}+C_1\Lambda_1^{2^{n+1}-2}+C_2\Lambda_2^{2^{n+1}-2}}\right)^2.
\end{align*}
If $|\Lambda_0|>|\Lambda_1|\ge |\Lambda_2|$, then
$z_n\to \lambda_0$ and $r_n\to\Lambda_0^2-\Lambda_0^2=0$
as $n\to\infty$
and the convergence rate is quadratic.
If $|\Lambda_0|=|\Lambda_1|\ge |\Lambda_2|$ and $z_0$ is not a root of the cubic equation
$\lambda_0$, $\lambda_1$ or $\lambda_2$, then $z_n$ and $r_n$ does not converge.

In addition, we can prove the following theorem.

\begin{theorem}\label{thm:quad-where-converge}
  Suppose that the roots $\lambda_0$, $\lambda_1$, $\lambda_2$ are real
  and $\lambda_0<\lambda_1<\lambda_2$.
  Then, as $n\to\infty$,
  \begin{itemize}
  \item if $z_0<\lambda_1$ then $z_n\to \lambda_0$;
  \item if $z_0=\lambda_1$ then $z_n=\lambda_1$ for all $n$;
  \item if $z_0>\lambda_1$ then $z_n\to \lambda_2$.
  \end{itemize}
\end{theorem}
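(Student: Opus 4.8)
The plan is to reduce the statement to a comparison of the moduli $|\Lambda_0|$, $|\Lambda_1|$, $|\Lambda_2|$ and then to quote the dominant-root analysis already established in Case~1. The starting observation is that the three characteristic roots differ from the cubic's roots by a common shift: setting $s\coloneq\lambda_0+\lambda_1+\lambda_2$, a direct inspection of the formulas for the $\Lambda_i$ gives $\Lambda_i=\lambda_i+\tfrac{z_0-s}{2}$ for $i=0,1,2$. Consequently $\Lambda_0<\Lambda_1<\Lambda_2$ and, more importantly, $\Lambda_i-\Lambda_j=\lambda_i-\lambda_j$, so the pairwise differences of the $\Lambda$'s are exactly those of the $\lambda$'s.

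Next I would compute, for each pair, the difference of squared moduli. Since all $\Lambda_i$ are real, $|\Lambda_i|^2-|\Lambda_j|^2=(\Lambda_i-\Lambda_j)(\Lambda_i+\Lambda_j)$, and a short calculation using the common shift shows $\Lambda_i+\Lambda_j=z_0-\lambda_k$ whenever $\{i,j,k\}=\{0,1,2\}$. This yields the three clean identities
\begin{gather*}
  |\Lambda_2|^2-|\Lambda_0|^2=(\lambda_2-\lambda_0)(z_0-\lambda_1),\\
  |\Lambda_0|^2-|\Lambda_1|^2=(\lambda_0-\lambda_1)(z_0-\lambda_2),\\
  |\Lambda_2|^2-|\Lambda_1|^2=(\lambda_2-\lambda_1)(z_0-\lambda_0).
\end{gather*}
When $z_0<\lambda_1$, the first identity is negative and the second is positive, so $|\Lambda_0|$ is strictly the largest modulus; the Case~1 analysis then yields $z_n\to\lambda_0$. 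When $z_0>\lambda_1$, the first and third identities are positive, so $|\Lambda_2|$ is strictly the largest, and the same argument (with the index $2$ playing the role of $0$) gives $z_n\to\lambda_2$. Strict dominance also forces the relevant coefficient $C_0$ or $C_2$ to be nonzero, since a strictly dominant real $\Lambda$ cannot vanish, so the leading term does not drop out.

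The remaining, and genuinely delicate, case is $z_0=\lambda_1$, where the moduli tie: the first identity gives $|\Lambda_0|=|\Lambda_2|$, so the dominant-root heuristic degenerates and cannot by itself pin down the limit. Here I would argue directly. Since $z_0=\lambda_1$ is a root of the cubic, the numerator $z_0^3+c_1z_0^2+c_2z_0+c_3$ in \eqref{eq:zr-rec-1} vanishes, so $\lambda_1$ is a fixed point provided the denominator $3z_0^2+2c_1z_0+c_2+2r_0$ is nonzero; a short computation expresses this denominator as $\tfrac12\bigl((\lambda_0-\lambda_1)^2+(\lambda_1-\lambda_2)^2\bigr)$, which is strictly positive because the roots are distinct, whence $z_1=z_0=\lambda_1$ and, by induction, $z_n=\lambda_1$ for all $n$. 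I expect this boundary case to be the main obstacle, since one must guarantee the denominators never vanish as the iteration proceeds; the cleanest way to sidestep tracking $r_n$ is to substitute $\Lambda_2=-\Lambda_0$ (valid because $\Lambda_0+\Lambda_2=z_0-\lambda_1=0$) directly into the explicit solution of Case~1 and check that, with the exponent $2^{n+1}-2$ even, both numerator and denominator acquire the common factor $\Lambda_0^{2^{n+1}}-\Lambda_1^{2^{n+1}}$, which cancels to leave exactly $\lambda_1$.
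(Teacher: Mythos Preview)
Your proof is correct and follows the same strategy as the paper: both arguments deduce $\Lambda_0<\Lambda_1<\Lambda_2$ from $\lambda_i-\lambda_j=\Lambda_i-\Lambda_j$, compare $|\Lambda_0|$ with $|\Lambda_2|$ according to the sign of $z_0-\lambda_1$, and then invoke the dominant-root analysis of Case~1. The paper's comparison is written via the shift $\epsilon=(\lambda_1-z_0)/2$, while your factorisations $|\Lambda_i|^2-|\Lambda_j|^2=(\lambda_i-\lambda_j)(z_0-\lambda_k)$ are an equivalent (and slightly slicker) way to reach the same conclusion; your remark that strict dominance forces the corresponding $C_i\ne 0$ is a detail the paper leaves implicit. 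For $z_0=\lambda_1$ the paper simply observes that $\lambda_1$ is a fixed point of \eqref{eq:zr-rec-1}; your treatment is more careful in checking that the denominator $3\lambda_1^2+2c_1\lambda_1+c_2+2r_0=\tfrac12\bigl((\lambda_0-\lambda_1)^2+(\lambda_1-\lambda_2)^2\bigr)$ is nonzero and in proposing the closed-form verification via $\Lambda_2=-\Lambda_0$, which indeed cleanly bypasses having to track $r_n$.
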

\begin{proof}
  Since $z_n=\lambda_1$ is a fixed point of the system \eqref{eq:zr-rec-1},
  the second assertion obviously holds.
  The first and third assertions are proved as follows.
  Since $\Lambda_2-\Lambda_1=\lambda_2-\lambda_1$ and $\Lambda_1-\Lambda_0=\lambda_1-\lambda_0$
  hold, and from the assumption $\lambda_0<\lambda_1<\lambda_2$,
  we obtain the inequality $\Lambda_0<\Lambda_1<\Lambda_2$.
  Therefore, $\max\{|\Lambda_0|, |\Lambda_1|, |\Lambda_2|\}$ is $|\Lambda_0|$ or $|\Lambda_2|$,
  not $|\Lambda_1|$.
  Let $\epsilon=\frac{\lambda_1-z_0}{2}$,
  then
  \begin{align*}
    \Lambda_0&=\frac{z_0+\lambda_0-\lambda_1-\lambda_2}{2}=-\frac{\lambda_2-\lambda_0+2\epsilon}{2}=-\frac{\lambda_2-\lambda_0}{2}-\epsilon,\\
   \Lambda_2&=\frac{z_0+\lambda_2-\lambda_0-\lambda_1}{2}=\frac{\lambda_2-\lambda_0-2\epsilon}{2}=\frac{\lambda_2-\lambda_0}{2}-\epsilon.
  \end{align*}
  Hence, if $z_0<\lambda_1$ i.e. $\epsilon>0$ then $|\Lambda_0|>|\Lambda_2|$,
  and if $z_0>\lambda_1$ i.e. $\epsilon<0$ then $|\Lambda_0|<|\Lambda_2|$.
\end{proof}

\paragraph{Case 2}
If $\lambda_0\ne \lambda_1$ and $\lambda_1=\lambda_2$, then,
since $\Lambda_1=\Lambda_2=\frac{z_0-\lambda_0}{2}$,
$\phi_n$ is of the form
\begin{equation*}
  \phi_n=C_0\Lambda_0^n+C_1\Lambda_1^n+C_2 n\Lambda_1^{n-1},
\end{equation*}
where the coefficients $C_0$, $C_1$ and $C_2$ are determined by
$\phi_0=1$, $\phi_1=-\gamma_1=\Lambda_0+2\Lambda_1$ and $\phi_2=\gamma_1^2-\gamma_2=\Lambda_0^2+3\Lambda_1^2+2\Lambda_0\Lambda_1$, i.e.
\begin{equation*}
  \begin{pmatrix}
    1 & 1 & 0 \\
    \Lambda_0 & \Lambda_1 & 1\\
    \Lambda_0^2 & \Lambda_1^2 & 2\Lambda_1
  \end{pmatrix}
  \begin{pmatrix}
    C_0\\
    C_1\\
    C_2
  \end{pmatrix}
  =
  \begin{pmatrix}
    1 \\
    \Lambda_0+2\Lambda_1\\
    \Lambda_0^2+3\Lambda_1^2+2\Lambda_0\Lambda_1
  \end{pmatrix}.
\end{equation*}
The solution of this equation is
\begin{equation*}
  C_0=\frac{\Lambda_0^2}{(\Lambda_0-\Lambda_1)^2},\quad
  C_1=1-\frac{\Lambda_0^2}{(\Lambda_1-\Lambda_0)^2},\quad
  C_2=\frac{\Lambda_1^2}{\Lambda_1-\Lambda_0}.
\end{equation*}
In this case, the solution~\eqref{eq:zr-presol} becomes
\begin{align*}
  z_n
  &=\frac{C_0\Lambda_0^{2^{n+1}-1}+(C_1\Lambda_1+(2^{n+1}-1)C_2)\Lambda_1^{2^{n+1}-2}}{C_0\Lambda_0^{2^{n+1}-2}+(C_1\Lambda_1+(2^{n+1}-2)C_2)\Lambda_1^{2^{n+1}-3}}-\frac{z_0-\lambda_0-2\lambda_1}{2}\\
  &=\frac{C_0\lambda_0\Lambda_0^{2^{n+1}-2}+(C_1\lambda_1\Lambda_1+C_2\Lambda_1+(2^{n+1}-2)C_2\lambda_1)\Lambda_1^{2^{n+1}-3}}{C_0\Lambda_0^{2^{n+1}-2}+(C_1\Lambda_1+(2^{n+1}-2)C_2)\Lambda_1^{2^{n+1}-3}},\\
  r_n&=
  \begin{multlined}[t]
    \frac{C_0\Lambda_0^{2^{n+1}}+(C_1\Lambda_1+2^{n+1}C_2)\Lambda_1^{2^{n+1}-1}}{C_0\Lambda_0^{2^{n+1}-2}+(C_1\Lambda_1+(2^{n+1}-2)C_2)\Lambda_1^{2^{n+1}-3}}\\
    -\left(\frac{C_0\Lambda_0^{2^{n+1}-1}+(C_1\Lambda_1+(2^{n+1}-1)C_2)\Lambda_1^{2^{n+1}-2}}{C_0\Lambda_0^{2^{n+1}-2}+(C_1\Lambda_1+(2^{n+1}-2)C_2)\Lambda_1^{2^{n+1}-3}}\right)^2.
  \end{multlined}
\end{align*}
If $|\Lambda_0|>|\Lambda_1|$, then
$z_n\to \lambda_0$ and $r_n\to 0$ as $n\to\infty$
and the convergence rate is quadratic.
If $|\Lambda_0|<|\Lambda_1|$,
then, since we can verify that the relation
\begin{equation*}
  \frac{C_1\lambda_1\Lambda_1+C_2\Lambda_1+(2^{n+1}-2)C_2\lambda_1}{C_1\Lambda_1+(2^{n+1}-2)C_2}
  =\frac{\lambda_0 z_0+2^{n+2}(\lambda_0-\lambda_1)\lambda_1-\lambda_0^2}{z_0+2^{n+2}(\lambda_0-\lambda_1)-\lambda_0}
\end{equation*}
holds, we can conclude that $z_n\to \lambda_1$ and $r_n\to 0$ as
$n\to\infty$
and the convergence rate is linear,
where $\lim_{n\to\infty}\frac{|z_{n+1}-\lambda_1|}{|z_n-\lambda_1|}=\frac{1}{2}$ and
$\lim_{n\to\infty}\frac{|r_{n+1}|}{|r_n|}=\frac{1}{4}$.
If $|\Lambda_0|=|\Lambda_1|$, $z_0 \ne \lambda_0$ and $z_0\ne \lambda_1$,
then $z_n$ and $r_n$ does not converge.

The following theorem is proved in the same way as
the proof of Theorem~\ref{thm:quad-where-converge}.

\begin{theorem}\label{thm:multiple-where-conberge}
  Suppose that $\lambda_0$, $\lambda_1$, $\lambda_2$ are real,
  $\lambda_0\ne \lambda_1$, and $\lambda_1=\lambda_2$.
  Then, as $n\to \infty$,
  \begin{itemize}
  \item if $\lambda_0<\lambda_1$ and $z_0<\lambda_1$ then $z_n\to \lambda_0$;
  \item if $\lambda_0<\lambda_1$ and $z_0\ge\lambda_1$ then $z_n\to \lambda_1$;
  \item if $\lambda_0>\lambda_1$ and $z_0>\lambda_1$ then $z_n\to \lambda_0$;
  \item if $\lambda_0>\lambda_1$ and $z_0\le\lambda_1$ then $z_n\to \lambda_1$.
  \end{itemize}
\end{theorem}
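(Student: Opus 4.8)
The plan is to mirror the proof of Theorem~\ref{thm:quad-where-converge}: in each regime of the real ordering I will decide whether $|\Lambda_0|>|\Lambda_1|$ or $|\Lambda_0|<|\Lambda_1|$, and then read off the limit from the convergence dichotomy already established for Case~2 (namely $|\Lambda_0|>|\Lambda_1|\Rightarrow z_n\to\lambda_0$ at quadratic rate, and $|\Lambda_0|<|\Lambda_1|\Rightarrow z_n\to\lambda_1$ at linear rate). Recall that under the standing hypothesis $\lambda_1=\lambda_2$ the eigenvalues specialize to $\Lambda_0=\frac{z_0+\lambda_0-2\lambda_1}{2}$ and $\Lambda_1=\Lambda_2=\frac{z_0-\lambda_0}{2}$.

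First I would dispose of the boundary value $z_0=\lambda_1$. Since $\lambda_1$ is a root of the cubic $z^3+c_1z^2+c_2z+c_3=0$, the numerator of \eqref{eq:zr-rec-1} vanishes at $z_n=\lambda_1$, so $\lambda_1$ is a fixed point and $z_n=\lambda_1$ for all $n$; this covers the equality instances $z_0=\lambda_1$ appearing in the second and fourth bullets, which lie outside the scope of the Case~2 magnitude comparison.

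For $z_0\ne\lambda_1$, the key computation is the factorization
\[
  \Lambda_0^2-\Lambda_1^2=(\Lambda_0-\Lambda_1)(\Lambda_0+\Lambda_1)=(\lambda_0-\lambda_1)(z_0-\lambda_1),
\]
which follows from the elementary identities $\Lambda_0-\Lambda_1=\lambda_0-\lambda_1$ and $\Lambda_0+\Lambda_1=z_0-\lambda_1$. Because $\lambda_0\ne\lambda_1$, this product vanishes only when $z_0=\lambda_1$, so for $z_0\ne\lambda_1$ it always yields a strict comparison: $|\Lambda_0|>|\Lambda_1|$ exactly when $(\lambda_0-\lambda_1)(z_0-\lambda_1)>0$, and $|\Lambda_0|<|\Lambda_1|$ exactly when this product is negative.

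A sign analysis then finishes each bullet. When $\lambda_0<\lambda_1$, the choice $z_0<\lambda_1$ makes the product positive, hence $|\Lambda_0|>|\Lambda_1|$ and $z_n\to\lambda_0$, whereas $z_0>\lambda_1$ makes it negative and gives $z_n\to\lambda_1$. When $\lambda_0>\lambda_1$ the two subcases reverse in the same way, yielding $z_n\to\lambda_0$ for $z_0>\lambda_1$ and $z_n\to\lambda_1$ for $z_0<\lambda_1$. Since the entire argument reduces to this one algebraic identity together with bookkeeping of signs, I do not anticipate a genuine obstacle; the only point demanding care is the degenerate value $z_0=\lambda_1$ — the unique place where $|\Lambda_0|=|\Lambda_1|$ and the Case~2 dichotomy is silent — which is precisely why it must be settled separately by the fixed-point observation rather than by the magnitude comparison.
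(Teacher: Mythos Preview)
Your proof is correct and follows essentially the same plan the paper intends: the paper simply declares that Theorem~\ref{thm:multiple-where-conberge} is proved ``in the same way'' as Theorem~\ref{thm:quad-where-converge}, namely by settling $z_0=\lambda_1$ via the fixed-point observation and otherwise comparing $|\Lambda_0|$ with $|\Lambda_1|$. Your difference-of-squares identity $\Lambda_0^2-\Lambda_1^2=(\lambda_0-\lambda_1)(z_0-\lambda_1)$ is a clean replacement for the $\epsilon$-substitution used in the proof of Theorem~\ref{thm:quad-where-converge}, and it makes the case analysis immediate.
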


\paragraph{Case 3}
If $\lambda_0=\lambda_1=\lambda_2\eqcolon\lambda$, then,
since $\Lambda_0=\Lambda_1=\Lambda_2=\frac{z_0-\lambda}{2}\eqcolon\Lambda$,
$\phi_n$ is of the form
\begin{equation*}
  \phi_n=C_0\Lambda^n+C_1n\Lambda^{n-1}+C_2 \frac{n(n-1)}{2}\Lambda^{n-2},
\end{equation*}
where the coefficients $C_0$, $C_1$ and $C_2$ are determined by
$\phi_0=1$, $\phi_1=-\gamma_1=3\Lambda$ and $\phi_2=\gamma_1^2-\gamma_2=6\Lambda^2$, i.e.
\begin{equation*}
  \begin{pmatrix}
    1 & 0 & 0 \\
    \Lambda & 1 & 0\\
    \Lambda^2 & 2\Lambda & 1
  \end{pmatrix}
  \begin{pmatrix}
    C_0\\
    C_1\\
    C_2
  \end{pmatrix}
  =
  \begin{pmatrix}
    1 \\
    3\Lambda\\
    6\Lambda^2
  \end{pmatrix}.
\end{equation*}
The solution of this equation is
\begin{equation*}
  C_0=1,\quad
  C_1=2\Lambda,\quad
  C_2=\Lambda^2.
\end{equation*}
In this case, the solution~\eqref{eq:zr-presol} becomes
\begin{align*}
  z_n
  &=\frac{(2^{n+2}-1+(2^{n+1}-1)(2^{n}-1))\Lambda}{2^{n+2}-3+(2^{n+1}-3)(2^{n}-1)}-\frac{z_0-3\lambda}{2}\\
  &=\frac{2^{n+2}\lambda+2\Lambda-3\lambda+(2^{n+1}\lambda+2\Lambda-3\lambda)(2^{n}-1)}{2^{n+2}-3+(2^{n+1}-3)(2^{n}-1)},\\
  r_n
  &=\frac{(2^{n+2}+1+(2^{n+1}-1)2^{n})\Lambda^2}{2^{n+2}-3+(2^{n+1}-3)(2^{n}-1)}
  -\left(\frac{(2^{n+2}-1+(2^{n+1}-1)(2^{n}-1))\Lambda}{2^{n+2}-3+(2^{n+1}-3)(2^{n}-1)}\right)^2.
\end{align*}
Hence, $z_n\to \lambda$ and $r_n\to 0$ as $n\to\infty$
and the convergence rate is linear,
where $\lim_{n\to\infty}\frac{|z_{n+1}-\lambda|}{|z_n-\lambda|}=\frac{1}{2}$ and
$\lim_{n\to\infty}\frac{|r_{n+1}|}{|r_n|}=\frac{1}{4}$.

\subsection{Numerical examples}

The following cases correspond to the cases in the previous subsection.
The values were computed in double precision using Python 3.
Note again that the value of $r_0$ depends on $z_0$, $c_1$ and $c_2$: $r_0=-\gamma_2=-\frac{3z_0^2+2c_1z_0-c_1^2+4c_2}{4}$.

\begin{table}[t]
  \caption{Computation results for the cubic equation $(z+4)(z-1)(z-2)=z^3+z^2-10z+8=0$.
  The initial values are $z_0=-10$ (left) and $z_0=1.1$ (right).}
  \label{tab:case1}
  \centering

  \ttfamily
  \footnotesize
  \medskip
  \begin{minipage}{.49\textwidth}
    \begin{tabular}{rrr}
      \hline
      $n$ & $z_n$ & $r_n$\\\hline
      0 & -10.000000000000000 & -59.750000000000000\\
      1 & -4.737541528239203 & -3.972123652056819\\
      2 & -4.023220863117919 & -0.114955702599983\\
      3 & -4.000020394369582 & -0.000101878191844\\
      4 & -4.000000000013991 & -0.000000000069960\\
      5 & -4.000000000000000 & -0.000000000000000\\
      6 & -4.000000000000000 & 0.000000000000000\\
      7 & -4.000000000000000 & 0.000000000000000\\
      8 & -4.000000000000000 & 0.000000000000000\\
      9 & -4.000000000000000 & 0.000000000000000\\
      10 & -4.000000000000000 & 0.000000000000000\\\hline
    \end{tabular}
  \end{minipage}\hfill
  \begin{minipage}{.49\textwidth}
    \begin{tabular}{rrr}
      \hline
      $n$ & $z_n$ & $r_n$\\\hline
      0 & 1.100000000000000 & 8.792500000000000\\
      1 & 1.134215430488259 & 5.761636330033983\\
      2 & 1.212192345618043 & 4.332736842090070\\
      3 & 1.370666542852025 & 3.389273687697018\\
      4 & 1.613648691839280 & 2.168857411309579\\
      5 & 1.861185520779511 & 0.813617415734770\\
      6 & 1.983219552286651 & 0.100401102854631\\
      7 & 1.999764038372010 & 0.001415714090050\\
      8 & 1.999999953598109 & 0.000000278411343\\
      9 & 1.999999999999998 & 0.000000000000011\\
      10 & 2.000000000000000 & 0.000000000000000\\\hline
    \end{tabular}
  \end{minipage}
\end{table}
\paragraph{Case 1}
Consider the cubic equation $(z+4)(z-1)(z-2)=z^3+z^2-10z+8=0$.
Table~\ref{tab:case1} is computation results of the system~\eqref{eq:zr-rec}
for this equation. The convergence rate is indeed quadratic.
For $z_0=1.1$, though $|z_0-1|$ is smaller than $|z_0+4|$ and $|z_0-2|$,
$z_n$ goes to 2 (see Theorem~\ref{thm:quad-where-converge}).

\begin{table}[t]
  \caption{Computation results for the cubic equation $(z+4)(z-1)^2=z^3+2z^2-7z+4=0$.
  The initial values are $z_0=0.9$ (left) and $z_0=1.1$ (right).}
  \label{tab:case2}
  \centering

  \ttfamily
  \footnotesize
  \medskip
  \begin{minipage}{.49\textwidth}
     \begin{tabular}{rrr}
      \hline
      $n$ & $z_n$ & $r_n$\\\hline
      0 & 0.900000000000000 & 6.492500000000000\\
      1 & 0.895921764461090 & 3.508310979314073\\
      2 & 0.887095029874565 & 2.048450171247236\\
      3 & 0.866370988166584 & 1.395451403926513\\
      4 & 0.808753710259613 & 1.287826511304223\\
      5 & 0.581198636424688 & 2.093986246044651\\
      6 & -0.945986990298644 & 6.001525302845456\\
      7 & -3.908537536032578 & 0.449822261884304\\
      8 & -3.999993354829612 & 0.000033257605959\\
      9 & -3.999999999999983 & 0.000000000000085\\
      10 & -4.000000000000000 & 0.000000000000000\\\hline
    \end{tabular}
  \end{minipage}\hfill
  \begin{minipage}{.49\textwidth}
    \begin{tabular}{rrr}
      \hline
      $n$ & $z_n$ & $r_n$\\\hline
      0 & 1.100000000000000 & 5.992500000000000\\
      1 & 1.096081444487130 & 2.759113054450161\\
      2 & 1.088851212112513 & 1.169920725404437\\
      3 & 1.076497685250043 & 0.420726496487894\\
      4 & 1.058204951031817 & 0.108663135551645\\
      5 & 1.037037021593743 & 0.014137594891871\\
      6 & 1.019881645542941 & 0.000201947113096\\
      7 & 1.009980464499531 & -0.000097829886169\\
      8 & 1.004985434704009 & -0.000024854527482\\
      9 & 1.002491475246450 & -0.000006207448904\\
      10 & 1.001245427328125 & -0.000001551089229\\\hline
    \end{tabular}
  \end{minipage}
\end{table}
\paragraph{Case 2}
Consider the cubic equation $(z+4)(z-1)^2=z^3+2z^2-7z+4=0$.
Table~\ref{tab:case2} is computation results of the system~\eqref{eq:zr-rec}
for this equation.
We can confirm that the destination of $z_n$ is
as stated in Theorem~\ref{thm:multiple-where-conberge}.
The convergence rate is quadratic for $z_0<1$ and linear for $z_0>1$,
where $\frac{|z_{n+1}-1|}{|z_n-1|}\approx\frac{1}{2}$ and $\frac{|r_{n+1}|}{|r_n|}\approx\frac{1}{4}$
for large $n$.

\begin{table}[t]
  \caption{A computation result for the cubic equation $(z-1)^3=z^3-3z^2+3z-1=0$.
  The initial value is $z_0=1.1$.}
  \label{tab:case3}
  \centering

  \ttfamily
  \footnotesize
  \medskip
  \begin{tabular}{rrr}
    \hline
    $n$ & $z_n$ & $r_n$\\\hline
    0 & 1.100000000000000 & -0.007500000000000\\
    1 & 1.033333333333339 & -0.000694444444443\\
    2 & 1.014285714285861 & -0.000114795918363\\
    3 & 1.006666666667307 & -0.000023611111108\\
    4 & 1.003225806447293 & -0.000005365504721\\
    5 & 1.001587301576629 & -0.000001279446954\\
    6 & 1.000787401570636 & -0.000000312422489\\
    7 & 1.000392156874934 & -0.000000077193860\\
    8 & 1.000195693252556 & -0.000000019186196\\
    9 & 1.000097745995403 & -0.000000004782721\\
    10 & 1.000048819760003 & -0.000000001196000\\\hline
  \end{tabular}
\end{table}
\paragraph{Case 3}
Consider the cubic equation $(z-1)^3=z^3-3z^2+3z-1=0$.
Table~\ref{tab:case3} is a computation result of the system~\eqref{eq:zr-rec}
for this equation.
The convergence rate is indeed linear,
where $\frac{|z_{n+1}-1|}{|z_n-1|}\approx\frac{1}{2}$ and $\frac{|r_{n+1}|}{|r_n|}\approx\frac{1}{4}$ for large $n$.

\section{Concluding remarks}\label{sec:concluding-remarks}
In this paper, we extended the method to obtain a closed form solution to the initial value problem
of the Newton--Raphson iteration for quadratic equations to the case of cubic equations
and derived a solvable iterative method similar to the Newton--Raphson iteration for cubic equations.
We showed that the solution to this novel system also converges to a root of a given cubic equation
and the convergence rate is quadratic in many cases and verified this fact by numerical examples.

Originally, Kondo--Nakamura's motivation was the relationship between numerical algorithms and
discrete integrable systems~\cite{nakamura2001aaa}.
The determinants of tridiagonal matrices they considered appear as a special case of the Chebyshev polynomials,
which is one of the classical orthogonal polynomials~\cite{chihara1978iop}.
The author is studying discrete integrable systems from the point of view of orthogonal functions,
and a work on the biorthogonal polynomials~\cite{maeda2017nuh} leads to this study.
We expect that the theory of orthogonal functions may derive more solvable iterative methods.

\section*{Acknowledgements}
This work was supported by JSPS KAKENHI Grant Number JP21K13837.




\end{document}